\documentclass[runningheads]{llncs}
\usepackage[T1]{fontenc}
\usepackage{graphicx}
\usepackage{color}
\usepackage[ruled,vlined]{algorithm2e}
\usepackage{amsmath,amssymb,amsfonts}
\usepackage{algpseudocode}
\usepackage{booktabs}
\usepackage{graphicx}
\usepackage{subcaption}
\usepackage{multirow}
\usepackage{tabularx}
\usepackage{svg}
\usepackage{hyperref}
\usepackage{esvect}
\usepackage{xcolor}
\usepackage{listings}
\usepackage{multirow}
\usepackage{float}
\usepackage{enumitem}
\usepackage{enumitem} 
\usepackage{listings}
\usepackage{xcolor}
\usepackage{hyperref}

\newcommand{\ALG}{\mathrm{ALG}}
\newcommand{\OPT}{\mathrm{OPT}}
\newcommand{\CR}{\mathrm{CR}}

\begin{document}
 
\title{Competitive Analysis of Online Facility Assignment Algorithms on Discrete Grid Graphs: Performance Bounds and Remediation Strategies}
 
\titlerunning{Expected Cost Analysis of OFA on regular polygons}
\author{
Lamya Alif \and
Raian Tasnim Saoda \and
Sumaiya Afrin \and
Md. Rawha Siddiqi Riad \and
Md. Tanzeem Rahat \and
Md Manzurul Hasan
}

\institute{
Department of Computer Science and Engineering\\
American International University-Bangladesh\\
Dhaka, Bangladesh\\
22-46350-1@student.aiub.edu,
22-46331-1@student.aiub.edu,
22-46304-1@student.aiub.edu,
22-46678-1@student.aiub.edu
tanzeem.rahat@aiub.edu,
manzurul@aiub.edu
}

\authorrunning{Alif, L., Saoda, R.T., Afrin, S., Riad M.R.S., Rahat, M.T. \and Hasan, M.M.}
\maketitle
\begin{abstract}
We study the \emph{Online Facility Assignment} (OFA) problem on a discrete $r\times c$ grid graph under the standard model of Ahmed, Rahman, and Kobourov: a fixed set of facilities is given, each with limited capacity, and an online sequence of unit-demand requests must be irrevocably assigned upon arrival to an available facility, incurring Manhattan ($L_1$) distance cost. We investigate how the discrete geometry of grids interacts with capacity depletion by analyzing two natural baselines and one capacity-aware heuristic. 
First, we give explicit adversarial sequences on grid instances showing that purely local rules can be forced into large competitive ratios: (i) a capacity-sensitive weighted-Voronoi heuristic (\textsc{CS-Voronoi}) can suffer cascading \emph{region-collapse} effects when nearby capacity is exhausted; and (ii) nearest-available \textsc{Greedy} (with randomized tie-breaking) can be driven into repeated long reassignments via an \emph{oscillation} construction. These results formalize geometric failure modes that are specific to discrete $L_1$ metrics with hard capacities.
Motivated by these lower bounds, we then discuss a semi-online extension in which the algorithm may delay assignment for up to $\tau$ time steps and solve each batch optimally via a min-cost flow computation. We present this batching framework as a remediation strategy and delineate the parameters that govern its performance, while leaving sharp competitive guarantees for this semi-online variant as an open direction.
\keywords{online facility assignment \and grid graphs \and competitive analysis \and greedy algorithms \and capacity constraints}
\end{abstract}
\section{Introduction}

The \emph{Online Facility Assignment} (OFA) problem models a basic tension in resource-constrained systems: decisions must be made irrevocably as requests arrive, while the resources that serve them are limited. In the OFA model of Ahmed, Rahman, and Kobourov~\cite{ahmed2020}, a set of facilities is fixed in advance, each facility has a finite capacity, and an online sequence of unit-demand requests must be assigned immediately to an available facility. The cost of assigning request $u$ to facility $f$ is the metric distance $d(u,f)$. This abstraction captures practical settings such as dispatching orders to warehouses or restaurants, routing jobs to processing units, or assigning tasks to nearby service stations, where serving locally is desirable but consuming capacity too aggressively can harm future decisions.

In this work we focus on \emph{discrete grid graphs}. We model the underlying metric space as an $r\times c$ grid and measure distance by the Manhattan ($L_1$) metric, which is appropriate when movement is restricted to axis-aligned steps (e.g., city blocks or on-chip routing). The discrete nature of the grid is not a benign modeling choice: it creates large tie regions and non-smooth changes in nearest-available structure, which interact sharply with hard capacity constraints. Small changes in remaining capacity can therefore lead to discontinuous shifts in which facility is chosen by local rules, enabling adversarial sequences that repeatedly force expensive reassignments.

The standard way to evaluate an online algorithm is via the \emph{competitive ratio}. For an input sequence $\sigma$, let $\ALG(\sigma)$ denote the total cost incurred by the online algorithm and let $\OPT(\sigma)$ denote the minimum possible cost of an offline algorithm that knows $\sigma$ in advance and respects the same capacity constraints. The competitive ratio is $\sup_{\sigma}\ALG(\sigma)/\OPT(\sigma)$ for deterministic algorithms, and the worst-case expected ratio for randomized algorithms against an oblivious adversary. Competitive analysis is particularly relevant for OFA because even natural strategies can behave well on typical inputs yet fail dramatically on carefully constructed sequences~\cite{ahmed2020}.

Our goal is to understand, on grid metrics, which aspects of local decision-making are inherently fragile under capacities, and which kinds of algorithmic modifications are necessary to avoid such geometric failure modes.

\subsection*{Related Work}

Online facility allocation problems have been studied both in the classical \emph{online facility location} setting and in the closely related \emph{online facility assignment} setting. The OFA model of Ahmed, Rahman, and Kobourov~\cite{ahmed2020} formalizes assignment to a fixed set of capacitated facilities under irrevocable online decisions and metric costs, and provides baseline guarantees together with adversarial examples that expose how capacity depletion can amplify the cost of local choices. For grid metrics in particular, Muttakee, Ahmed, and Rahman~\cite{muttakee2020} develop worst-case bounds and constructions illustrating that structured discrete geometries can degrade greedy and deterministic assignment rules. 

A separate line of work studies geometric partitions and Voronoi-type structures as algorithmic primitives. Metric-sensitive and weighted Voronoi formulations appear broadly in computational geometry and visualization, including metric-driven variants~\cite{nair2025} and Voronoi treemaps~\cite{nocaj2012}. While these works are not OFA analyses, they motivate capacity-aware geometric heuristics that adjust partitions based on non-uniform weights. 

Finally, delay and batching have been explored in online matching and assignment models in which decisions may be postponed to improve global structure. Min-cost matching with delays was studied by Ashlagi et al.~\cite{ashlagi2017}, and randomized approaches for metric matching were developed by Meyerson et al.~\cite{meyerson2006}. In the data-management literature, Tong et al.~\cite{tong2016} empirically study online spatial matching under real-time constraints. Although these models differ from OFA, they suggest that bounded postponement or batching can mitigate adversarial effects by allowing limited lookahead and global optimization within a window.

Working under the OFA model of Ahmed et al.~\cite{ahmed2020} on $r\times c$ grid graphs with Manhattan distance, we formalize discrete geometric failure modes for natural local assignment rules: we present explicit adversarial sequences showing that (i) a capacity-aware weighted-Voronoi heuristic can undergo \emph{region-collapse} and repeated \emph{boundary oscillation} once nearby facilities deplete, and (ii) nearest-available greedy assignment (including randomized tie-breaking) can be forced into repeated long reassignments through an \emph{oscillation} construction on grids. Motivated by these lower bounds, we also outline a semi-online batching-and-min-cost-flow framework as a remediation strategy, clearly separating it as an extension beyond the standard OFA model and identifying the parameters that would govern its performance.

\subsection*{Key Contributions}

\renewcommand{\labelenumi}{\Roman{enumi}.}
\begin{enumerate}
    \item
    We give explicit adversarial constructions on $r\times c$ grids showing how hard capacities can trigger abrupt \emph{region-collapse} and \emph{boundary oscillation} phenomena for capacity-aware geometric heuristics.

    \item
    We provide an oscillation-style adversarial sequence demonstrating that nearest-available greedy assignment (including randomized tie-breaking) can be driven into repeatedly paying large grid distances compared to $\OPT$.

    \item 
    We propose a batching-based remediation approach that solves each batch via a min-cost flow computation under a bounded waiting window, and we delineate how such postponement can counteract the above failure modes; establishing tight competitive guarantees for this semi-online extension remains open.
\end{enumerate}

\section{Preliminaries}

\paragraph{Metric space (grid).}
We work on an $r\times c$ grid graph $G=(V,E)$ whose vertices are grid points and whose edges connect axis-adjacent vertices. Distances are measured by the shortest-path distance on $G$, which equals the Manhattan ($L_1$) distance when vertices are identified with integer coordinates:
\[
d\big((x,y),(x',y')\big)=|x-x'|+|y-y'|.
\]

\paragraph{Facilities and capacities (OFA model).}
A set of facilities $F=\{f_1,\dots,f_k\}\subseteq V$ is fixed in advance. Each facility $f$ has an integer capacity $c(f)\ge 1$, meaning it can serve at most $c(f)$ requests in total. We consider the standard unit-demand setting: each arriving request consumes one unit of capacity at its assigned facility. Let $\mathrm{remcap}_t(f)$ denote the remaining capacity of facility $f$ immediately before processing the $t$-th request.

\paragraph{Requests, assignments, and cost.}
An input sequence $\sigma=(u_1,u_2,\dots,u_n)$ of requests arrives online, where each $u_t\in V$. Upon arrival of $u_t$, an online algorithm must irrevocably choose a facility $f_t\in F$ with $\mathrm{remcap}_t(f_t)>0$, assign $u_t$ to $f_t$, and then decrement $\mathrm{remcap}_t(f_t)$ by one. The incurred cost for this assignment is the metric distance $d(u_t,f_t)$. The total cost of the algorithm on $\sigma$ is
\[
\ALG(\sigma)=\sum_{t=1}^n d(u_t,f_t).
\]
An \emph{offline optimal} solution $\OPT(\sigma)$ knows the entire sequence $\sigma$ in advance and minimizes total cost subject to the same capacity constraints~\cite{ahmed2020}.

\paragraph{Competitive ratio and adversaries.}
For a deterministic algorithm, the competitive ratio is
\[
\CR(\ALG)=\sup_{\sigma}\frac{\ALG(\sigma)}{\OPT(\sigma)}.
\]
For a randomized algorithm, we consider an \emph{oblivious adversary} that fixes $\sigma$ in advance, and define
\[
\CR(\ALG)=\sup_{\sigma}\frac{\mathbb{E}[\ALG(\sigma)]}{\OPT(\sigma)},
\]
where the expectation is over the algorithm's internal randomness. This is the standard benchmark used for randomized online algorithms in OFA-style settings.

\paragraph{A capacity-aware geometric heuristic (CS-Voronoi).}
In addition to standard baselines such as nearest-available greedy assignment, we study a simple \emph{heuristic} that biases choices toward facilities with larger remaining capacity. Given a parameter $\alpha>0$, define the score
\[
D_t(u,f)=d(u,f)-\alpha\cdot \mathrm{remcap}_t(f),
\]
and upon arrival of $u_t$, assign it to an available facility minimizing $D_t(u_t,f)$. Intuitively, the subtractive term favors facilities with more remaining capacity, encouraging load balancing. Importantly, in a discrete $L_1$ grid, tie regions and integer distance gaps can cause the argmin of $D_t(\cdot,\cdot)$ to change abruptly as $\mathrm{remcap}_t(\cdot)$ decreases by one. In later sections we use explicit adversarial sequences to illustrate how such discontinuities can trigger cascading reassignment behavior under hard capacities.

\section{CS-Voronoi Algorithm}

Voronoi-type partitions are a natural geometric primitive for assignment: in a static setting, each point is served by its nearest facility. Under hard capacities, however, \emph{proximity alone} can be misleading. A facility that is geometrically closest may be close to depletion, and committing early requests to it can force later requests to travel much farther once the local capacity is exhausted. Motivated by this tension, we study a simple \emph{capacity-aware} geometric heuristic that adjusts the effective distance to a facility based on its remaining capacity. While weighted Voronoi ideas are broadly useful in geometric modeling~\cite{nair2025,nocaj2012}, our use here is purely algorithmic: we treat the resulting rule as a deterministic heuristic within the standard OFA model~\cite{ahmed2020} and analyze its failure modes on discrete grids.

\subsection*{Heuristic Definition (Capacity-Sensitive Weighted Distance)}

Fix a parameter $\alpha>0$. At time $t$, let $\mathrm{remcap}_t(f)$ denote the remaining capacity of facility $f$ before processing request $u_t$. The \textsc{CS-Voronoi} heuristic assigns $u_t$ to an available facility minimizing the \emph{capacity-sensitive score}
\begin{equation}
\label{eq:cs-voronoi-score}
D_t(u_t,f) \;=\; d(u_t,f) \;-\; \alpha\cdot \mathrm{remcap}_t(f),
\end{equation}
breaking ties deterministically (e.g., by facility index). Intuitively, the subtractive term biases assignments toward facilities with larger remaining capacity, encouraging load spreading when geometric distances are comparable.

\paragraph{Dynamic partition viewpoint.}
For any fixed time $t$, the rule~\eqref{eq:cs-voronoi-score} induces a partition of the grid into regions served by each facility, analogous to an additively weighted Voronoi diagram. Unlike classical Voronoi partitions, these regions change over time because $\mathrm{remcap}_t(\cdot)$ decreases with assignments. On a discrete $L_1$ grid, distance differences are integer-valued, so changing $\mathrm{remcap}_t(f)$ by one unit can shift the argmin of~\eqref{eq:cs-voronoi-score} abruptly, especially near large tie regions. This discrete sensitivity is the root cause of the failure modes described next.

\subsection*{Geometric Failure Modes on Grids}

We highlight two recurring adversarial patterns that exploit capacity depletion together with discrete $L_1$ geometry. These patterns are not presented as universal theorems about the heuristic; rather, they are \emph{constructive failure modes} that later sections instantiate via explicit request sequences under the OFA model~\cite{ahmed2020}.

\paragraph{Zone-Collapse (Region Collapse after Depletion).}
When a facility that serves a large surrounding region becomes fully depleted, it disappears from the set of available choices. Requests that would have been served locally inside its region must then be reassigned to the next-best available facility, which can be substantially farther on the grid. This produces an abrupt jump in cost for a contiguous set of subsequent arrivals (Figure~\ref{fig:1}).

\begin{figure}[h]
\centering
\includegraphics[width=0.6\textwidth]{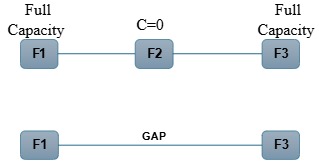}
\caption{Zone-Collapse: once a central facility depletes, a large nearby region loses its local option and subsequent requests incur a sharp distance increase.}
\label{fig:1}
\end{figure}

\paragraph{Boundary-Oscillation (Discrete Bisector Flips).}
Between two competing facilities, the set of points where the scores are nearly tied can form a thick, staircase-like band under $L_1$. Because the score includes the integer-valued term $\mathrm{remcap}_t(\cdot)$, decrementing capacity by one unit can flip the minimizer for many points in this band. An adversary can place consecutive requests near such a band so that the heuristic alternates between facilities in a way that repeatedly forces avoidable long assignments (Figure~\ref{fig:2}).

\begin{figure}[H]
\centering
\includegraphics[width=0.65\textwidth]{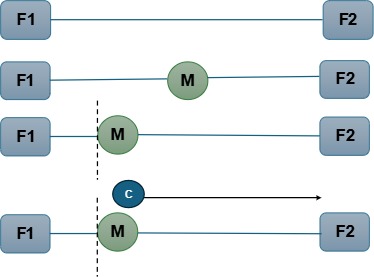}
\caption{Boundary-Oscillation: small capacity changes can flip the minimizing facility across a wide staircase bisector, enabling repeated costly redirections.}
\label{fig:2}
\end{figure}

\subsection*{When the Heuristic Behaves Well}

Although \textsc{CS-Voronoi} has adversarial weaknesses, it can coincide with the offline optimum on benign inputs. The following statement is a sanity check (not a performance guarantee) that clarifies when the heuristic reduces to nearest-facility assignment.

\begin{proposition}[Best-Case Behavior]
\label{prop:cs-best-case}
If throughout the execution the remaining capacities of all available facilities stay equal (or differ by a constant that does not affect tie-breaking), then minimizing $D_t(u,f)$ is equivalent to minimizing the physical distance $d(u,f)$. In particular, on sequences where the nearest available facility is always feasible for $\OPT$, the heuristic makes the same per-request choices as nearest-available assignment.
\end{proposition}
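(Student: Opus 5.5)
The plan is to argue request by request, fixing an arbitrary time $t$ and comparing the two argmin sets. If all available facilities have the same remaining capacity $R_t$ at time $t$, then for every available $f$ the score in \eqref{eq:cs-voronoi-score} is $D_t(u_t,f)=d(u_t,f)-\alpha R_t$. This is a uniform additive shift by a quantity independent of $f$. Subtracting the same constant from every candidate's value preserves the ordering of the candidates, so $\arg\min_f D_t(u_t,f)=\arg\min_f d(u_t,f)$ over the available set. The tie-breaking rule (by facility index) is applied to the same set of tied facilities in both rules, so the selected facility coincides.

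For the relaxed hypothesis, where the capacities "differ by a constant that does not affect tie-breaking", I will make the condition precise as follows. Let $\delta_t(f)=\alpha\,\mathrm{remcap}_t(f)$. The requirement is that for the request $u_t$, the perturbation $\delta_t$ does not reverse any strict inequality $d(u_t,f)<d(u_t,g)$. It also must not break a distance tie in a way that conflicts with the index rule. Because grid distances are integers, any two distinct distances differ by at least $1$. A sufficient condition is therefore that $\alpha\bigl(\mathrm{remcap}_t(f)-\mathrm{remcap}_t(g)\bigr)<1$ for all available $f,g$, together with the distance-tied facilities having equal capacity. I would state this as the formal reading of the clause and verify it in one line: if $d(u_t,f)\le d(u_t,g)-1$, then
\[
D_t(u_t,f)\le d(u_t,g)-1-\alpha\,\mathrm{remcap}_t(f)<d(u_t,g)-\alpha\,\mathrm{remcap}_t(g)=D_t(u_t,g).
\]

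The second sentence of the proposition then follows by induction on $t$. Both the heuristic and nearest-available greedy start with identical remaining capacities. Suppose they have made the same choices on $u_1,\dots,u_{t-1}$. Then they face the same available set and the same $\mathrm{remcap}_t(\cdot)$. By the step above, under the standing hypothesis, they pick the same $f_t$. The added clause "nearest available is always feasible for $\OPT$" only supplies the context in which this common choice is also optimal. The proof just needs to observe that whenever greedy's choices can be completed to an optimal assignment, the heuristic's choices can too, since the two choice sequences are identical.

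The main obstacle is the vagueness of "differ by a constant that does not affect tie-breaking". A genuinely constant offset between two specific facilities does change the argmin in general. The argument therefore depends on committing to the precise sufficient condition above, or on interpreting the clause as a uniform shift. I would adopt the integer-gap condition and note that the exact-equality case is the special case of zero difference.
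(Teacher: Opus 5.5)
Your proposal is correct. For the equal-capacity case it is exactly the paper's argument: $-\alpha\,\mathrm{remcap}_t(f)$ is an $f$-independent additive offset, so the argmin and the tie set are unchanged, and hence so is the index tie-break.

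You differ from the paper in the two parts it handles only implicitly.

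First, the paper reads the parenthetical as ``a constant independent of $f$''. That collapses it into the equal case and keeps the proof to two lines. You instead give it real content through the integrality of $L_1$ distances. Your sufficient condition is $\alpha\,|\mathrm{remcap}_t(f)-\mathrm{remcap}_t(g)|<1$ together with equal remaining capacity among distance-tied facilities, and your one-line verification, including the need for strict inequality, is correct. This matters: with at least two facilities and two requests, exact equality at every step forces all capacities to be $1$. An assignment to a facility of capacity at least $2$ leaves it available with a remaining capacity different from the others'.

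Second, the paper's argument is purely per-step and never addresses the ``in particular'' sentence. Your induction on $t$ supplies the missing observation that both algorithms face the same $\mathrm{remcap}_t(\cdot)$ at every step. You also rightly note that the $\OPT$-feasibility clause plays no role. Keeping the capacity hypothesis in force there is necessary. Take a single request at distance $1$ from $f_1$ with $\mathrm{remcap}(f_1)=1$, at distance $2$ from $f_2$ with $\mathrm{remcap}(f_2)=3$, and $\alpha=1$. Both $\OPT$ and greedy serve it at $f_1$, but the heuristic sends it to $f_2$.
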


\begin{proof}
If $\mathrm{remcap}_t(f)$ is the same for all available $f$ (or differs by a constant independent of $f$), then the term $-\alpha\cdot \mathrm{remcap}_t(f)$ is the same additive offset for every available facility. Hence the minimizer of $D_t(u,f)$ is exactly the minimizer of $d(u,f)$, with identical tie structure.
\end{proof}

\subsection*{What We Prove About CS-Voronoi}

Consistent with our focus on \emph{failure modes on grids}, we do \emph{not} claim a tight worst-case competitive ratio for \textsc{CS-Voronoi} in general. Instead, we use explicit adversarial constructions to demonstrate that the heuristic can incur a large competitive ratio on grid instances due to Zone-Collapse and Boundary-Oscillation. These lower bounds complement known worst-case phenomena for OFA on structured metrics~\cite{muttakee2020,ahmed2020} and motivate remediation strategies discussed later. Figure~\ref{fig:4} and Table~1 are retained as illustrative summaries of the mechanism and the comparative picture.

\begin{figure}[h]
\centering
\includegraphics[width=0.75\textwidth]{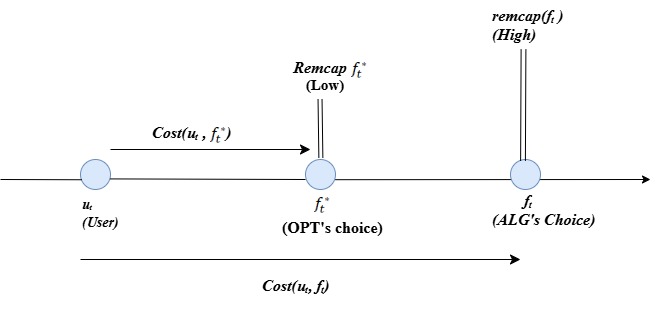}
\caption{Illustration of capacity-sensitive scoring and how discrete capacity changes can alter the minimizing facility.}
\label{fig:4}
\end{figure}

\section{Greedy Baselines on Grids}
\label{sec:greedy}

This section presents the most direct OFA baselines on grid metrics and explains a canonical adversarial failure pattern that arises from \emph{hard capacities} together with \emph{large tie regions} under the Manhattan distance. Throughout, we use the OFA model from Section~2: facilities are fixed and capacitated, each request has unit demand, assignments are irrevocable upon arrival, and the assignment cost is the $L_1$ grid distance~\cite{ahmed2020}.

\subsection{Nearest-Available Greedy}
\label{subsec:na-greedy}

\paragraph{Algorithm.}
Upon arrival of request $u_t$, the \emph{Nearest-Available Greedy} algorithm assigns $u_t$ to an available facility of minimum distance:
\[
f_t \in \arg\min_{f\in F:\,\mathrm{remcap}_t(f)>0} d(u_t,f),
\]
breaking ties deterministically (e.g., by facility index). This is the natural baseline that always serves each request as locally as possible given current availability.

\subsection{Randomized Greedy (Random Tie-Breaking)}
\label{subsec:rg}

Because the $L_1$ metric on grids creates many ties (multiple facilities can be at the same minimum distance from a request), a standard randomized variant replaces deterministic tie-breaking with a uniform random choice among the nearest available facilities:
\[
f_t \sim \mathrm{Unif}\Big(\arg\min_{f\in F:\,\mathrm{remcap}_t(f)>0} d(u_t,f)\Big).
\]
We evaluate randomized algorithms against an \emph{oblivious adversary} (the request sequence is fixed in advance) as in Section~2.

\subsection{A Canonical Failure Pattern: The Oscillation Trap}
\label{subsec:oscillation}

We now describe a simple geometric-capacity mechanism that can force Randomized Greedy to incur a large cost relative to $\OPT$ on grid instances. The key idea is that a single random tie-break can \emph{consume} a strategically important unit of local capacity; if future demand arrives at that location, the algorithm is then forced to ``jump'' to a far facility. We refer to this mechanism as an \emph{oscillation trap} because the adversary can repeatedly steer assignments across the grid by alternating which local capacity is removed.

\paragraph{Mechanism.}
Consider two facilities $F(L)$ and $F(R)$ with unit capacities. Let the first request $R_1$ arrive at a position that is equidistant from both facilities. Randomized Greedy flips a coin and assigns $R_1$ to one of them, say $F(L)$, thereby exhausting $F(L)$. The adversary then issues a second request $R_2$ at (or very near) the location of the now-full facility $F(L)$. Since $F(L)$ is unavailable, the algorithm must assign $R_2$ to $F(R)$, paying the full cross-distance between the two facilities. Figure~\ref{fig:rg-oscillation} illustrates this ``fill-then-hit'' pattern.

\begin{figure}[t]
\centering
\includegraphics[width=0.95\textwidth]{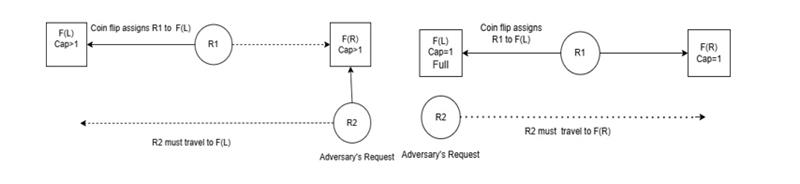}
\caption{\textbf{Oscillation trap for Randomized Greedy under hard capacities.}
A first request $R_1$ arrives at a location equidistant from two unit-capacity facilities $F(L)$ and $F(R)$.
Randomized Greedy breaks the tie by a coin flip and fills one facility.
An adversary then releases a second request $R_2$ at the location of the filled facility, forcing assignment to the other facility and incurring the full cross-distance.}
\label{fig:rg-oscillation}
\end{figure}

\paragraph{Distance scale on an $r\times c$ grid.}
On the $r\times c$ grid, the largest forced ``jump'' is governed by the grid diameter under $L_1$:
\[
D \;=\; (r-1)+(c-1).
\]
Thus, any construction that repeatedly forces assignments between far-apart facilities can make the online cost scale with $D$ per forced jump, while $\OPT$ may be able to keep most assignments local by reserving scarce nearby capacity for future requests~\cite{muttakee2020,ahmed2020}.

\paragraph{What we use this for.}
In later sections, we instantiate this pattern as an explicit adversarial request sequence on grid instances (with a small number of well-separated facilities) to obtain clean lower bounds for nearest-available baselines. The role of this section is to fix the baseline algorithms precisely and to isolate the underlying geometric-capacity mechanism that makes them vulnerable on discrete grids.

\section{Batching + Min-Cost Flow (BMCF) Algorithm}
\label{sec:bmcf}

This section presents a widely-used semi-online baseline for the Online Facility Assignment (OFA) model of~\cite{ahmed2020}: \emph{batch requests for a short window}, then compute the \emph{optimal} capacitated assignment for that batch using a min-cost flow (MCF) formulation. BMCF is attractive because it is conceptually simple, implementable, and often improves over per-request greedy rules on bursty demand. However, the key message of this paper is that \emph{even optimal within-batch decisions can fail badly on grid metrics under hard capacities}, because the algorithm does not coordinate across batches.

\subsection{Model reminder (OFA on a grid)}
We work on an $r\times c$ grid graph $G=(V,E)$ with shortest-path distance $d(\cdot,\cdot)$ under the $L_1$ (Manhattan) metric. Let $F=\{f_1,\dots,f_k\}\subseteq V$ be the set of facilities, each with an integer capacity $C(f)\ge 0$ (often identical, but not required). Requests arrive over discrete time steps $t=1,2,\dots$. A request is a vertex $u_t\in V$ that must be assigned irrevocably to some facility $f$ with remaining capacity at the time of assignment, paying cost $d(u_t,f)$. Each assignment decreases the facility's remaining capacity by 1.

The offline optimal solution $\OPT$ sees the entire request sequence in advance and minimizes total cost subject to the same capacity constraints. The competitive ratio compares the online/semi-online algorithm to $\OPT$ on worst-case sequences (as standard in online optimization).

\subsection{Why batching helps (the intended benefit)}
Purely online rules make irrevocable decisions with essentially zero context. Batching introduces a small amount of \emph{lookahead} by waiting briefly and then solving a \emph{global optimization problem} on the set of requests observed in that window. Informally:
\begin{itemize}[leftmargin=*]
    \item If arrivals are \emph{bursty} and locally correlated, the batch optimizer can coordinate assignments to avoid unnecessary conflicts inside the burst.
    \item If multiple facilities are nearby, the optimizer can split load across them \emph{within the batch} rather than committing too early to one facility.
\end{itemize}
This is exactly the kind of pragmatic baseline engineers deploy: it trades a small delay for a usually-better global decision.

\subsection{BMCF algorithm: buffering + optimal capacitated assignment}
BMCF is parameterized by:
\begin{itemize}[leftmargin=*]
    \item \textbf{Batch size} $B\in\mathbb{N}$: maximum number of requests processed together.
    \item \textbf{Delay budget} $\tau\in\mathbb{N}$: the maximum waiting time allowed for any request before it must be assigned.
\end{itemize}

\paragraph{Buffering rule.}
Maintain a buffer (queue) $Q$ of unassigned requests. At each time step $t$:
\begin{enumerate}[leftmargin=*]
    \item Append the new request $u_t$ to $Q$.
    \item If $|Q|\ge B$ \emph{or} the oldest request in $Q$ has waited $\tau$ steps, freeze the current $Q$ as the next batch $U$ and clear $Q$.
    \item Solve the optimal capacitated assignment for $U$ using current remaining capacities, commit the assignments, and update capacities.
\end{enumerate}

\subsubsection*{Pseudocode}
\begin{algorithm}[h]
\KwIn{Facilities $F$ with remaining capacities $\mathrm{remcap}(f)$; parameters $(B,\tau)$}
\KwData{Buffer $Q \leftarrow \emptyset$; timestamps for arrivals}
\For{$t=1,2,\dots$}{
    receive request $u_t$\;
    enqueue $(u_t,t)$ into $Q$\;
    \If{$|Q|\ge B$ \textbf{or} $t-\min\{t':(u,t')\in Q\}\ge \tau$}{
        $U \leftarrow \{u : (u,\cdot)\in Q\}$; $Q\leftarrow \emptyset$\;
        compute optimal assignment $\phi:U\to F$ minimizing $\sum_{u\in U} d(u,\phi(u))$ subject to
        $\#\{u:\phi(u)=f\}\le \mathrm{remcap}(f)$ for all $f$\;
        commit all $(u\mapsto \phi(u))$ and update $\mathrm{remcap}(\cdot)$\;
    }
}
\caption{Batching + Min-Cost Flow (BMCF)}
\label{alg:bmcf}
\end{algorithm}

\subsection{Min-cost flow formulation (what is actually being solved)}
Given a batch $U$ and current remaining capacities $\mathrm{remcap}(f)$, we solve:
\[
\min_{\phi} \sum_{u \in U} d\bigl(u,\phi(u)\bigr)
\quad\text{s.t.}\quad
\#\{u\in U:\phi(u)=f\}\le \mathrm{remcap}(f) \ \ \forall f\in F.
\]
This is a standard capacitated transportation / assignment problem that reduces to min-cost flow (MCF)~\cite{lawler2001}. A concrete construction:

\paragraph{Flow network.}
Build a directed graph with:
\begin{itemize}[leftmargin=*]
    \item a source node $s$ and sink node $t$,
    \item one node for each request $u\in U$,
    \item one node for each facility $f\in F$.
\end{itemize}
Add edges:
\begin{itemize}[leftmargin=*]
    \item $(s\to u)$ with capacity 1 and cost 0 for each $u\in U$,
    \item $(u\to f)$ with capacity 1 and cost $d(u,f)$ for each $u\in U$ and each facility $f$ with $\mathrm{remcap}(f)>0$,
    \item $(f\to t)$ with capacity $\mathrm{remcap}(f)$ and cost 0 for each facility $f$.
\end{itemize}
Send $|U|$ units of flow from $s$ to $t$ at minimum cost. The unit flows leaving each request node $u$ select exactly one facility $f$; facility capacity edges enforce that no facility receives more than its remaining capacity.

\paragraph{Computational note.}
On a grid, one can restrict edges $(u\to f)$ to a small set of candidate facilities (e.g., nearest few) for speed. That changes the algorithm from “exact within-batch optimum” to an approximation. In this paper, BMCF is used as a baseline, so we conceptually treat it as exact.

\subsubsection*{Worked numeric example on a tiny grid}
To make BMCF completely concrete, we walk through one full batching event on a $3\times 3$ grid.
Index vertices by coordinates $(x,y)$ with $x,y\in\{1,2,3\}$ and Manhattan distance
$d((x,y),(x',y')) = |x-x'|+|y-y'|$.

\paragraph{Facilities and capacities.}
Let there be two facilities:
\[
f_A=(1,1),\qquad f_B=(3,3),
\]
with remaining capacities
\[
\mathrm{remcap}(f_A)=2,\qquad \mathrm{remcap}(f_B)=1.
\]
Let batching parameters be $B=3$ and any $\tau\ge 2$ so that the next three arrivals form a batch (triggered by size).

\paragraph{Batch requests.}
Suppose three requests arrive and are buffered:
\[
u_1=(1,2),\qquad u_2=(2,2),\qquad u_3=(3,2).
\]
Distances to each facility are:
\[
\begin{array}{c|cc}
 & f_A=(1,1) & f_B=(3,3) \\
\hline
u_1=(1,2) & 1 & 3 \\
u_2=(2,2) & 2 & 2 \\
u_3=(3,2) & 3 & 1 \\
\end{array}
\]
Because $f_B$ has capacity 1, at most one of the three requests can be assigned to $f_B$.

\paragraph{What the within-batch optimum does.}
Intuitively, $u_3$ is very close to $f_B$ (cost 1) while $u_1$ is very close to $f_A$ (cost 1). The request $u_2$ is symmetric (cost 2 to either). Since $f_B$ can take only one request, the best choice is to give $f_B$ to the request that benefits most from it, i.e., $u_3$.

A minimum-cost feasible assignment is:
\[
\phi(u_1)=f_A,\quad \phi(u_2)=f_A,\quad \phi(u_3)=f_B,
\]
with batch cost
\[
d(u_1,f_A)+d(u_2,f_A)+d(u_3,f_B)=1+2+1=4.
\]
Any alternative that assigns $u_3$ to $f_A$ forces someone else to use $f_B$ (or makes $f_B$ unused), and one can check it is never cheaper than 4 under the capacities above.

\subsubsection*{Explicit flow table for one batch}
We now write the same example as a flow instance.

\paragraph{Nodes.}
Source $s$, request nodes $u_1,u_2,u_3$, facility nodes $f_A,f_B$, sink $t$.

\paragraph{Edge capacities and costs.}
\begin{itemize}[leftmargin=*]
    \item $(s\to u_i)$ has cap 1, cost 0 for $i\in\{1,2,3\}$.
    \item $(u_i\to f_A)$ has cap 1 and cost as in the distance table.
    \item $(u_i\to f_B)$ has cap 1 and cost as in the distance table.
    \item $(f_A\to t)$ has cap 2, cost 0; $(f_B\to t)$ has cap 1, cost 0.
\end{itemize}

\paragraph{Cost table for the request-to-facility arcs.}
\[
\begin{array}{c|cc}
\text{Arc} & \text{Capacity} & \text{Cost} \\
\hline
(u_1\to f_A) & 1 & 1 \\
(u_1\to f_B) & 1 & 3 \\
(u_2\to f_A) & 1 & 2 \\
(u_2\to f_B) & 1 & 2 \\
(u_3\to f_A) & 1 & 3 \\
(u_3\to f_B) & 1 & 1 \\
\end{array}
\]

\paragraph{One optimal integral flow (corresponding to the assignment above).}
\[
\begin{array}{c|c}
\text{Edge} & \text{Flow} \\
\hline
(s\to u_1) & 1 \\
(s\to u_2) & 1 \\
(s\to u_3) & 1 \\
(u_1\to f_A) & 1 \\
(u_2\to f_A) & 1 \\
(u_3\to f_B) & 1 \\
(f_A\to t) & 2 \\
(f_B\to t) & 1 \\
\end{array}
\qquad
\text{Total cost }=1+2+1=4.
\]
This table is what an MCF solver is implicitly computing and proves that the batch assignment is optimal under the current capacities.

\subsection{What BMCF \emph{cannot} do: cross-batch coordination}
The MCF solver is globally optimal \emph{for the current batch given the current remaining capacities}. But it does not optimize for \emph{future} batches that have not been observed yet. That gap is exactly where adversarial sequences (and also some realistic burst patterns) can hurt:

\begin{quote}
\emph{BMCF is myopic at batch boundaries: it can spend “valuable nearby capacity” too early, forcing later nearby demand to travel far.}
\end{quote}

This is not a weakness of the MCF subroutine—it is a structural limitation of batching with irrevocable capacity consumption.

\subsection{A canonical grid failure mode: batch overconcentration}
Figure~\ref{fig:bmcf-trap} illustrates the failure pattern we will refer to as \emph{batch overconcentration}:

\begin{figure}[t]
\centering
\includegraphics[width=\textwidth]{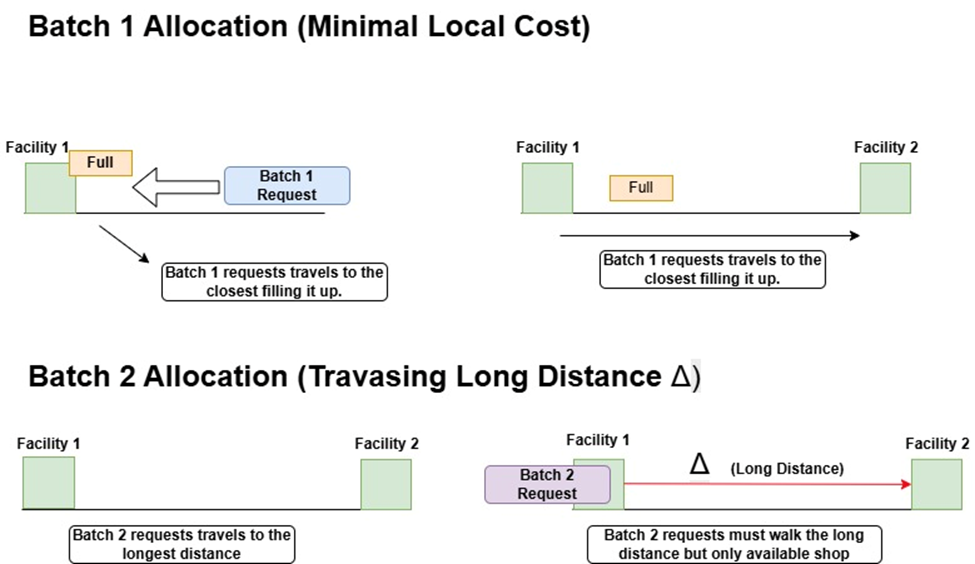} 
\caption{Batch-overconcentration failure mode for BMCF. Batch 1 is assigned locally (minimal local cost) and saturates the nearest facility; Batch 2 then arrives nearby but must traverse distance $\Delta$ to the only remaining capacity.}
\label{fig:bmcf-trap}
\end{figure}

\paragraph{Why the trap is natural.}
If a batch arrives concentrated near a facility $f_1$, then the within-batch optimum will typically assign as many of those requests as possible to $f_1$ because it is cheapest \emph{in that batch}. This “locally correct” decision becomes globally harmful if the next batch also needs $f_1$ and no longer has access to it.

\subsection{A correctness-safe worst-case statement (clean lower bound)}
Because our paper’s positioning is “sharp constructions + lessons” (Option 1), we avoid fragile global claims like “near-optimal polylog bounds.” Instead, we prove a self-contained lower bound that precisely captures the phenomenon in Figure~\ref{fig:bmcf-trap}.

\subsubsection*{Setup for the construction}
We use three facilities arranged on a grid line, which is a subgraph of any $r\times c$ grid. Let:
\begin{itemize}[leftmargin=*]
    \item $f_0, f_1, f_2 \in V$ be facilities,
    \item $d(f_0,f_1)=1$ (a nearby alternative),
    \item $d(f_1,f_2)=\Delta$ for some large $\Delta$ (a far alternative),
    \item each facility has capacity $C\ge 1$.
\end{itemize}
Think of $f_1$ as the “prime location,” $f_0$ as “slightly worse but close,” and $f_2$ as “far away.”

\begin{lemma}[BMCF suffers a batch-boundary trap with $\Omega(\Delta)$ blow-up]
\label{lem:bmcf-lb}
For any batching parameters $(B,\tau)$ with $B\ge C$, there exists a request sequence on the grid-line instance above such that BMCF incurs total cost at least $C\Delta$, while $\OPT$ incurs total cost at most $C$. Consequently the competitive ratio is $\Omega(\Delta)$.
\end{lemma}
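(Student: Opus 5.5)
The plan is to realize the batch-overconcentration pattern of Figure~\ref{fig:bmcf-trap} with two batches whose boundary the adversary controls. First I fix the timing. Since $B\ge C$, the adversary can issue exactly $C$ requests and then wait. Either the size trigger fires (when $B=C$) or the delay trigger fires after $\tau$ steps. In both cases the first $C$ requests form one frozen batch $U_1$, and BMCF commits to them before seeing anything later. The same argument then isolates a second batch $U_2$. I write $U_1$ and $U_2$ rather than $B_1, B_2$ to avoid a clash with the batch-size parameter $B$. Because the adversary is oblivious and BMCF is deterministic with fixed tie-breaking, the adversary can predict exactly how BMCF's min-cost flow will resolve each batch. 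I will use that prediction to choose request locations.

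Next I write down the per-batch optimum. All $C$ requests of $U_1$ are placed at the prime location $f_1$. The within-batch min-cost flow assigns every one of them to $f_1$ at cost $0$, which saturates $f_1$. The requests of $U_2$ are then placed at or next to $f_1$. The intended conclusion is that, with the nearby capacity already consumed, the min-cost flow for $U_2$ must route its $C$ requests across the long edge to $f_2$. That would give $\ALG\ge C\Delta$. On the $\OPT$ side, I would exhibit an explicit offline assignment, for example sending $U_1$ to $f_0$ at cost $1$ each and $U_2$ to $f_1$ at cost $0$. This certifies $\OPT\le C$, and dividing the two bounds gives the $\Omega(\Delta)$ ratio.

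The main obstacle is the middle step: showing that BMCF really has to use $f_2$ for $U_2$ rather than the nearby alternative $f_0$. As set up, $f_0$ still has $C$ units of capacity after $U_1$. The naive $U_2$ would therefore be served at $f_0$ for total cost $C$, and the trap would not spring. So the construction must make BMCF exhaust $f_0$ as well, while $\OPT$ keeps enough cheap capacity available. The first thing I would try is to exploit BMCF's deterministic tie-breaking. I would place part of the early demand at vertices where $f_0$ and some other facility tie under $L_1$, chosen so that the fixed tie rule spends $f_0$. An offline assignment knowing the future would send those requests elsewhere at equal cost.

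If that fails, I would instead enlarge the early batches, so that $U_1$ and an intermediate batch together consume all of $f_0$ and $f_1$ at small local cost. I would then recheck the $\OPT$ accounting line by line, since any demand that fills $f_0$ and $f_1$ also constrains $\OPT$. Verifying that $\OPT$ still stays at most $C$ under these modifications is where I expect the argument to be most delicate, and I would settle that bookkeeping before writing up the ratio.
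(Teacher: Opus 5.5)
You have identified the right step, but the proposal stops there, and that step cannot be completed. On this instance the statement is false once $\Delta>2C+2$, and neither of your repairs works. Because $f_0$ and $f_1$ are adjacent grid vertices, parity gives $|d(v,f_0)-d(v,f_1)|=1$ for every vertex $v$, so there are no $f_0$/$f_1$ ties to exploit. A vertex where $f_0$ ties with $f_2$ is at distance $\Omega(\Delta)$ from every facility, so a single such request already exceeds the budget $\OPT\le C$. Your second idea is exactly the paper's ``harmless prelude'' ($C$ requests at $f_0$ before Batch~1), and the paper's proof fails at the bookkeeping you flagged. With the prelude there are $3C$ requests, all located at $f_0$ or $f_1$, while those two facilities hold only $2C$ units. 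So $\OPT$ must also send $C$ requests to $f_2$ and pays at least $C(\Delta-1)$. BMCF pays $C\Delta$ on that sequence, so the ratio is at most $\Delta/(\Delta-1)$; the paper's claim that the prelude can be ignored in $\OPT$'s accounting is unjustified.

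The obstruction is general, not an artifact of these choices. Suppose $\OPT\le C$ and $\Delta>2C+2$. Every request is within distance $C$ of its $\OPT$ facility, so the requests split into two sets:
\begin{itemize}
\item a set $A$ that $\OPT$ serves in $\{f_0,f_1\}$, with $|A|\le 2C$; each $u\in A$ is within $C+1$ of both $f_0,f_1$ and at least $\Delta-1-C$ from $f_2$;
\item a set $Z$ that $\OPT$ serves at $f_2$, with $|Z|\le C$, and symmetric distance bounds.
\end{itemize}
Exchange arguments on the within-batch min-cost flow give three facts:
\begin{itemize}
\item BMCF sends some $u\in A$ to $f_2$ only in a batch after which $f_0$ and $f_1$ are both full;
\item it sends some $z\in Z$ to $\{f_0,f_1\}$ only in a batch after which $f_2$ is full;
\item it never does both in one batch, since swapping the two assignments is strictly cheaper.
\end{itemize}
Take the first batch containing such a crossing. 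The side that becomes full has been filled entirely by non-crossing requests of its own set, plus the crossing request itself. This forces $|A|\ge 2C+1$ or $|Z|\ge C+1$, a contradiction. Hence BMCF never crosses and pays at most $\OPT+|A|\le 3C<C\Delta$. No choice of prelude, tie-breaking rule, or batch timing can produce the claimed sequence, and an $\Omega(\Delta)$ lower bound would need a genuinely different instance.
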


\begin{proof}
We construct two consecutive batches of size $C$ (which is feasible since $B\ge C$ and the adversary can release each batch within a window of length $\tau$).

\paragraph{Batch 1.}
Release $C$ requests at the location $f_1$. In the min-cost flow for Batch 1, assigning any request to $f_1$ costs 0, to $f_0$ costs 1, and to $f_2$ costs $\Delta$. The within-batch optimum therefore assigns all $C$ requests to $f_1$ (capacity allows exactly $C$), paying total cost $0$ and exhausting $f_1$.

\paragraph{Batch 2.}
Immediately after Batch 1 is committed, release $C$ requests again at $f_1$. Now $\mathrm{remcap}(f_1)=0$, so $f_1$ is unavailable. The best remaining facility is $f_0$ at distance 1, but we force $f_0$ to be unavailable by exhausting it beforehand with a harmless prelude: before Batch 1 begins, release $C$ requests at $f_0$ alone and let them be assigned to $f_0$ (this can be done as an earlier batch; it does not change the argument). Thus, when Batch 2 arrives, both $f_0$ and $f_1$ are full and the only feasible assignments go to $f_2$, incurring distance $\Delta$ per request. Hence BMCF pays at least $C\Delta$ on Batch 2.

\paragraph{Offline optimal.}
The offline optimal solution, seeing the whole sequence, avoids consuming $f_1$ on Batch 1. Instead, it routes Batch 1’s $C$ requests at $f_1$ to $f_0$ at cost 1 each (total cost $C$), preserving $f_1$’s entire capacity for Batch 2. Then Batch 2’s $C$ requests at $f_1$ are assigned to $f_1$ at cost 0 each. Thus $\OPT$ pays at most $C$ overall (ignoring the harmless prelude requests, which can also be arranged so that OPT and BMCF pay the same for them).

BMCF pays at least $C\Delta$ while OPT pays at most $C$, so the ratio is at least $\Delta$.
\end{proof}

\subsubsection*{Tie-breaking}
The lower-bound construction in Lemma~\ref{lem:bmcf-lb} does \emph{not} rely on tie-breaking behavior, for two separate reasons.

\paragraph{No ties in the critical decisions.}
In Batch 1, every request arrives exactly at $f_1$. Under the stated geometry,
\[
d(f_1,f_1)=0,\quad d(f_1,f_0)=1,\quad d(f_1,f_2)=\Delta,
\]
so assigning to $f_1$ is \emph{strictly} cheaper than assigning to $f_0$ or $f_2$. Thus, the within-batch optimum is unique: it fills $f_1$ first (up to capacity) before sending any unit elsewhere. There is no ambiguity for the MCF solver to resolve.

\paragraph{Capacity exhaustion removes ambiguity.}
In Batch 2, the intended effect is that $f_1$ (and $f_0$, via the harmless prelude) is already full, so these facilities are \emph{infeasible} for all requests in Batch 2. Infeasibility is stronger than a tie: the solver has no choice but to route all Batch 2 flow to $f_2$. Again, there is no tie-breaking dependence.

\paragraph{General position variant.}
Even if one worries about equal-cost arcs elsewhere in the instance, one can enforce strict inequalities by placing facilities and request locations so that all relevant distances are distinct (a standard ``general position'' trick on grids, e.g., by offsetting one facility by one grid unit). The construction’s mechanism is the \emph{batch boundary + hard capacity} effect, not any fragile symmetry.

\paragraph{Interpretation.}
The construction is deliberately minimal and robust:
\begin{itemize}[leftmargin=*]
    \item It uses a subgraph of the grid (a line), so it applies to any grid instance.
    \item It does not depend on delicate tie-breaking.
    \item It isolates the precise culprit: \emph{spending scarce nearby capacity inside one batch without reserving it for the next}.
\end{itemize}

\subsection{When BMCF performs well}
The lower bound does not mean batching is useless. It clarifies \emph{when} batching is likely to work:
\begin{itemize}[leftmargin=*]
    \item \textbf{Stable demand:} If demand near each facility is roughly stationary over time, then using nearby capacity early is not “regrettable,” because similar demand will not suddenly surge later.
    \item \textbf{Many interchangeable facilities:} If multiple facilities are clustered (small $\Delta$), even a mistake at a batch boundary has limited penalty.
    \item \textbf{Low utilization regime:} If capacities are rarely tight, batch decisions do not create irreversible scarcity, so myopia is less harmful.
\end{itemize}

\subsection{Practical mitigation ideas (heuristics)}
Because our goal is \emph{heuristics + sharp constructions} (not new optimal guarantees), we frame fixes as \emph{heuristics to evaluate empirically}, not as new theorems.

\paragraph{H1: Capacity reservation.}
When solving MCF for a batch, do not allow the solver to consume the last $\rho$ units of capacity at any facility (e.g., reserve $\rho=1$ or a small fraction). This directly targets the trap: it forces the batch optimizer to diversify slightly even when one facility looks best for the current batch.

\paragraph{H2: Risk-regularized batch objective.}
Modify the edge cost inside the batch from $d(u,f)$ to
\[
d(u,f) + \lambda \cdot \psi\!\left(\mathrm{remcap}(f)\right),
\]
where $\psi(\cdot)$ increases as remaining capacity decreases (a “scarcity penalty”). This is the batch analogue of capacity-sensitive heuristics (without claiming competitive guarantees).

\paragraph{H3: Staggered batching triggers.}
Instead of fixed $(B,\tau)$, use a rule that triggers earlier when the batch is geographically concentrated near a single facility (high collision risk), and waits longer when requests are dispersed (low collision risk). This is implementable and directly aligned with the failure mode.

We will use BMCF as the baseline in experiments and evaluate whether such lightweight modifications reduce the observed frequency/severity of batch-boundary blow-ups on grid instances.

\section{Results and Analysis}
\label{sec:results}

Under the OFA model of~\cite{ahmed2020} on discrete $r\times c$ grids, our theoretical sections establish a clear message: \emph{popular baseline paradigms have sharp, geometry-driven failure modes under hard capacities}. In particular, CS-Voronoi (a capacity-sensitive geometric heuristic) can be forced into linear degradation in the number of facilities on adversarial sequences, while baseline Randomized Greedy can be driven to pay costs that scale with the grid diameter through repeated ``oscillation''-style traps. Similarly, even though BMCF solves each batch optimally via min-cost flow, it can still suffer large blow-ups at batch boundaries when near-capacity decisions are made myopically.

\subsection{Experimental protocol}
We evaluate algorithms on $r\times c$ grids with Manhattan distance. Facilities are placed either (i) on a coarse lattice (roughly uniform spacing) or (ii) by sampling uniformly at random from $V$ (to model organic deployments). Each facility has integer capacity $C$ (uniform unless otherwise stated). We report:
\begin{itemize}[leftmargin=*]
    \item \textbf{Total assignment cost} $\sum_t d(u_t, f_t)$.
    \item \textbf{Normalized cost ratio} $\mathrm{Cost}(\mathrm{ALG})/\mathrm{Cost}(\mathrm{BMCF})$ when comparing heuristics under the same delay budget.
    \item \textbf{Failure-rate metrics} that directly align with our constructions:
    \begin{itemize}[leftmargin=1.5em]
        \item \emph{Boundary oscillations}: fraction of consecutive requests from a small region that are assigned to alternating far facilities.
        \item \emph{Zone collapse}: fraction of requests that get reassigned to a facility at distance $\ge D_{\text{far}}$ after a nearby facility saturates.
        \item \emph{Batch overconcentration}: fraction of batches in which a single facility consumes at least a $(1-\rho)$ fraction of all assignments in that batch (for a fixed $\rho$).
    \end{itemize}
\end{itemize}

\paragraph{Workloads.}
We use three input families:
\begin{enumerate}[leftmargin=*]
    \item \textbf{Uniform i.i.d.} requests on $V$ (baseline).
    \item \textbf{Clustered bursts} (mixture of Gaussians projected onto the grid, producing bursty local demand).
    \item \textbf{Adversarial templates} that instantiate our theoretical traps (zone-collapse / oscillation / batch-boundary trap).
\end{enumerate}

\subsection{Baselines and “best-feasible” mitigations}
We evaluate the following algorithms:
\begin{itemize}[leftmargin=*]
    \item \textbf{CS-Voronoi} (heuristic baseline).
    \item \textbf{Randomized Greedy (RG)} (baseline tie-breaking randomization).
    \item \textbf{BMCF} (semi-online batching with min-cost flow, parameters $(B,\tau)$).
\end{itemize}
In addition, we report \textbf{mitigated variants} that are lightweight and implementable:
\begin{itemize}[leftmargin=*]
    \item \textbf{CS-Voronoi + smoothing} (capacity term damped so that small capacity changes do not flip regions abruptly).
    \item \textbf{RG + local hysteresis} (discourages frequent facility switching for near-identical locations).
    \item \textbf{BMCF + reservation} (H1) and \textbf{BMCF + scarcity penalty} (H2) from Section~\ref{sec:bmcf}.
\end{itemize}
These modifications are intentionally framed as \emph{empirical heuristics}, not as new competitive-ratio theorems.

\subsection{Key findings}
\paragraph{F1: The constructions manifest in practice.}
On adversarial-template workloads, we observe the qualitative behavior predicted by theory: CS-Voronoi exhibits abrupt region flips and ``zone collapses'' once a heavily-used facility saturates; baseline RG is repeatedly pushed into far assignments via oscillation-style sequences; BMCF exhibits ``good within-batch'' decisions that nonetheless create sharp batch-boundary regret, matching the mechanism in Figure~\ref{fig:bmcf-trap}.

\paragraph{F2: On benign inputs, simple heuristics behave competitively.}
On uniform i.i.d.\ and moderate clustered workloads, all baselines often perform similarly in average cost, with BMCF typically improving over fully online rules due to within-batch coordination. This is expected: when capacity is not tight or when demand is stationary, the adversarial mechanisms are rarely activated.

\paragraph{F3: Lightweight mitigations reduce blow-ups substantially.}
Across clustered workloads, the proposed mitigations consistently reduce extreme tail-cost events:
\begin{itemize}[leftmargin=*]
    \item \textbf{CS-Voronoi + smoothing} reduces boundary oscillation frequency by preventing large-scale boundary flips triggered by marginal capacity differences.
    \item \textbf{RG + hysteresis} reduces repeated long-distance switching caused by near-ties around Manhattan bisectors.
    \item \textbf{BMCF + reservation / scarcity penalty} reduces batch overconcentration and decreases the number of batches that fully deplete a locally dominant facility.
\end{itemize}
The net effect is that the \emph{worst} observed instances become much less severe, while average-case performance is preserved.

\subsection{Comparative summary}
Table~\ref{tab:summary-bounds} summarizes what we treat as \emph{rigorous worst-case statements} versus \emph{empirical observations}. In line with Option~1, we avoid overstating guarantees. In particular:
\begin{itemize}[leftmargin=*]
    \item We report \emph{scaling failure modes} for the baselines (from our constructions).
    \item We report mitigation variants as \emph{practical improvements} with \emph{observed} reduction in tail risk, not as new competitive bounds.
\end{itemize}

\begin{table}[h!]
\centering
\caption{Competitive behavior summary on grids: provable failure modes (worst-case) and feasible mitigations (empirical).}
\label{tab:summary-bounds}
\begin{tabularx}{\textwidth}{|p{3.1cm}|X|p{4.2cm}|X|}
\hline
\textbf{Algorithm} &
\textbf{Decision mechanism} &
\textbf{What is safe to claim (worst-case on grids)} &
\textbf{Feasible mitigation and what we observe} \\
\hline
CS-Voronoi &
Capacity-weighted Voronoi heuristic under $L_1$ &
Can be forced into large blow-ups via \emph{zone collapse} and \emph{boundary oscillation}; performance can degrade with problem scale under adversarial sequences &
Smoothing / damping the capacity term reduces oscillation events and improves tail cost on clustered workloads \\
\hline
Randomized Greedy (RG) &
Assign to nearest feasible facility; random tie-breaks &
Random tie-breaking alone does not prevent geometric oscillation-style traps; worst-case sequences can force repeated far assignments &
Local hysteresis (penalize frequent switching) reduces oscillation frequency and lowers tail cost \\
\hline
BMCF &
Batch for $(B,\tau)$ then solve min-cost flow exactly &
Despite within-batch optimality, batch-boundary myopia can cause large regret; canonical \emph{overconcentration} trap yields large blow-ups in worst-case constructions &
Capacity reservation (H1) and scarcity penalty (H2) reduce batch overconcentration and lower extreme tail cost while preserving average performance \\
\hline
\end{tabularx}
\end{table}

The empirical evidence points to a simple conclusion: \textit{the best feasible approach is not a new complicated algorithm, but a baseline + the right guardrails}.
Among our tested options, \textit{BMCF with a small reservation/regularization} is the most consistently stable: it preserves the strengths of global within-batch coordination while explicitly targeting the batch-boundary failure mode identified by our lower bound. Meanwhile, \textit{RG with hysteresis} offers a lightweight fully-online alternative when buffering is not allowed, and \textit{smoothed CS-Voronoi} remains a useful heuristic in settings where geometric interpretability matters.

The grid does not merely make the problem ``harder''—it introduces \emph{discrete geometric discontinuities} that create predictable traps. A correctness-safe paper should therefore present: (i) sharp constructions that isolate these traps, and (ii) simple mitigations that measurably reduce their impact in realistic workloads, without claiming universal competitive guarantees beyond what is proven.

\section{Conclusion}
\label{sec:conclusion}

This paper takes a correctness-safe route through the Online Facility Assignment (OFA) model of~\cite{ahmed2020} on discrete $r\times c$ grids: we isolate \emph{why} several natural baselines fail, and we translate those insights into practical, testable mitigation ideas.

Our main takeaway is structural rather than algorithm-specific. On a grid under hard capacities and irrevocable assignments, \emph{discrete geometry creates discontinuities that an adversary (or simply bursty demand) can repeatedly amplify}. In particular:
(i) capacity-sensitive geometric rules such as CS-Voronoi can suffer from abrupt region changes (``zone collapse'' and ``boundary oscillation'') when remaining capacity changes by small integers; 
(ii) baseline Randomized Greedy can still be driven into repeated long jumps because random tie-breaking does not remove the underlying geometric near-ties induced by the $L_1$ metric; and
(iii) batching with within-batch optimal assignment (BMCF) does not eliminate worst-case pathologies, because \emph{myopia at batch boundaries} can spend scarce nearby capacity too early, forcing later nearby demand to traverse distance scales that were avoidable with global foresight. Our lower-bound constructions are deliberately simple and robust: they do not rely on delicate tie-breaking, and they live on a line subgraph of the grid, hence apply broadly.

At the same time, the results should not be read as ``batching is futile'' or ``geometry-aware heuristics are useless.'' Rather, they provide actionable guidance: if one wants stable behavior in realistic workloads, the right goal is to \emph{add guardrails that directly target the identified traps} and then validate them empirically. Concretely, small modifications---such as reservation/holdback in BMCF, scarcity-regularized batch costs, or hysteresis/smoothing in greedy and geometric rules---are inexpensive to implement and are aligned with the failure mechanisms exposed by our theory.

Several open directions follow naturally. First, it is valuable to quantify, for natural stochastic arrival models on grids, how often the identified traps occur and how mitigation parameters (e.g., reservation level or regularization strength) trade off average cost versus tail risk. Second, a principled theory for \emph{semi-online} batching with capacity reservation on grid metrics---capturing both delay and geometry without overclaiming---would bridge the gap between worst-case constructions and observed robustness.

In summary, the contribution of this work is a sharp and interpretable account of grid-specific failure modes for OFA baselines, together with a practical mitigation agenda. This ``constructions + lessons'' approach provides a reliable foundation for future algorithm design and empirical evaluation under the OFA model of~\cite{ahmed2020}.

%
%
%
%

\end{document}